\newtheorem{rem}{Remark}
\def\mb{\mathbf}
\def\opn{\operatorname*}
\def\ss#1{{\sf #1}}
\def\vec#1{\mb{#1}}
\def\mat#1{\mb{\uppercase{#1}}}
\def\EspOp{\ss{E}}
\newcommand{\Esp}[2][5]{%
  \ifcase#1
     \EspOp\{ #2 \}
     \or \EspOp \bigl\{ #2 \bigr\}
     \or \EspOp \Bigl\{ #2 \Bigr\}
     \or \EspOp \biggl\{ #2 \biggr\}
     \or \EspOp \Biggl\{ #2 \Biggr\}
  \else
     \EspOp \left\{ #2  \right\}
\fi}
\newcommand{\Earg}[3][5]{%
  \ifcase#1
     \EspOp_{#3} \{ #2 \}
     \or \EspOp_{#3} \bigl\{ #2 \bigr\}
     \or \EspOp_{#3} \Bigl\{ #2 \Bigr\}
     \or \EspOp_{#3} \biggl\{ #2 \biggr\}
     \or \EspOp_{#3} \Biggl\{ #2 \Biggr\}
  \else
     \EspOp_{#3} \left\{ #2  \right\}
\fi}
\newcommand{\CEsp}[3][5]{%
  \ifcase#1
     \EspOp\{ #2 \mid #3 \}
     \or \EspOp \bigl\{ #2 \bigm\vert #3 \bigr\}
     \or \EspOp \Bigl\{ #2 \Bigm\vert #3 \Bigr\}
     \or \EspOp \biggl\{ #2 \biggm\vert #3 \biggr\}
     \or \EspOp \Biggl\{ #2 \Biggm\vert #3 \Biggr\}
  \else
     \EspOp \left\{ #2  \,\middle\vert\, #3 \right\}
\fi}
\def\Jacob{{\ss D}}
\def\log{\ss{log}}
\def\vecop{\ss{vec}}
\def\d{\opn{d}\!}
\begin{document}

\title{On MMSE Properties and I-MMSE Implications in Parallel MIMO Gaussian Channels}
\author{
\IEEEauthorblockN{Ronit Bustin \thanks{The work of R. Bustin,  M. Payar\'o and S. Shamai has been supported by
the European Commission in the framework of the FP7 Network of Excellence in Wireless Communications NEWCOM++. The work of R. Bustin and S. Shamai was also supported by the Israel Science Foundation.}}
\IEEEauthorblockA{\scriptsize{Dept. Electrical Engineering}\\
\scriptsize{Technion$-$IIT}\\
\scriptsize{Technion City, Haifa 32000} \\
\scriptsize{Israel}\\
\scriptsize{Email: bustin@tx.technion.ac.il}} \and
\IEEEauthorblockN{Miquel Payar\'o \thanks{The work of M. Payar\'o was also partially supported by the Spanish Government
under project TEC2008-06327-C03-03/TEC (FBMC-SILAN).} }
\IEEEauthorblockA{\scriptsize{Dept. of Radiocommunications}\\
\scriptsize{Centre Tecnol\`ogic de Telecomunicacions} \\
\scriptsize{de Catalunya (CTTC)}\\
\scriptsize{Castelldefels, Barcelona, Spain}\\
\scriptsize{Email: miquel.payaro@cttc.es}} \and
\IEEEauthorblockN{Daniel P.~Palomar \thanks{The work of D. P. Palomar has been supported in part by the Hong Kong Research Grant
DAG$\_$S08/09.EG05.}}
\IEEEauthorblockA{\scriptsize{Dept. of Elec. and Comp. Engineering}\\
\scriptsize{Hong Kong University of Science}\\
\scriptsize{and Technology}\\
\scriptsize{Clear Water Bay, Kowloon, Hong Kong}\\
\scriptsize{Email: palomar@ust.hk}} \and
\IEEEauthorblockN{Shlomo Shamai (Shitz)}
\IEEEauthorblockA{\scriptsize{Dept. Electrical Engineering}\\
\scriptsize{Technion$-$IIT}\\
\scriptsize{Technion City, Haifa 32000}\\
\scriptsize{Israel}\\
\scriptsize{Email: sshlomo@ee.technion.ac.il}} }

\maketitle


\begin{abstract}
This paper extends the ``single crossing point'' property of the scalar MMSE function, derived by Guo, Shamai and Verd\'u (first presented in ISIT 2008), to the parallel \emph{degraded} MIMO scenario. It is shown that the matrix $\boldsymbol{Q}(t)$, which is the difference between the MMSE assuming a Gaussian input and the MMSE assuming an arbitrary input, has, at most, a single crossing point for each of its eigenvalues. Together with the I-MMSE relationship, a fundamental connection between Information Theory and Estimation Theory, this new property is employed to derive results in Information Theory. As a simple application of this property we provide an alternative converse proof for the broadcast channel (BC) capacity region under covariance constraint in this specific setting.
\end{abstract}


\section{Introduction} \label{sec:introduction}
A fundamental relationship between estimation theory and information
theory for Gaussian channels was presented in \cite{IMMSE}; in
particular, it was shown that for the MIMO standard Gaussian
channel,
\begin{eqnarray} \label{eq:standardMIMOGaussian}
\boldsymbol{Y} = \sqrt{\textrm{snr}}\boldsymbol{H}\boldsymbol{X} +
\boldsymbol{N}
\end{eqnarray}
where $\boldsymbol{N}$ is a standard Gaussian $n$-dimensional random vector and $\boldsymbol{H}$ is a fixed channel matrix known to 
the receiver, then regardless of the input distribution on $\boldsymbol{X}$, the mutual information and
the minimum mean-square error (MMSE) are related (assuming
real-valued inputs/outputs) by
\begin{multline}
\frac{d}{d\textrm{snr}}
I(\boldsymbol{X};\sqrt{\textrm{snr}}\boldsymbol{H}\boldsymbol{X} +
\boldsymbol{N}) = \nonumber \\
\frac{1}{2}\mathbb{E}\{\parallel\boldsymbol{H}\boldsymbol{X} -
\boldsymbol{H}\mathbb{E} \{ \boldsymbol{X} |
\sqrt{\textrm{snr}}\boldsymbol{H}\boldsymbol{X} + \boldsymbol{N} \}
\parallel ^2\} \textrm{.}
\end{multline}
Here $\mathbb{E}\{X|Y\}$ stands for the conditional mean of $X$
given $Y$. This fundamental relationship and its generalizations
\cite{IMMSE,Palomar}, referred to as the I-MMSE relationships, have
already been shown to be useful in several aspects of information
theory: providing insightful proofs for entropy power inequalities
\cite{EPI}, revealing the mercury/waterfilling optimal power
allocation over a set of parallel Gaussian channels \cite{Water} and recently generalizing this result to MIMO Gaussian channels in \cite{MIMOPowerAllocation},
tackling the weighted sum-MSE maximization in MIMO broadcast
channels \cite{MSERegion}, illuminating extrinsic information of
good codes \cite{EXIT}, and enabling a simple proof of the
monotonicity of the non-Gaussianness of independent random variables
\cite{NonGaussian}.
In \cite{PROP,PROP_full} and later in \cite{EkremUlukus} it has been shown that using this relationship one can provide insightful and simple
proofs for multi-user single antenna problems such as the BC, the secrecy capacity problem, and the multi-receiver secrecy capacity region. In \cite{BustinFirst} this approach has been extended to the MIMO Gaussian wiretap channel yielding a closed form expression for the secrecy capacity. In order to provide the converse proof of the BC capacity region in \cite{PROP,PROP_full},
the authors proved an inherent property of the MMSE, the ``single crossing point'' property: as a function of \emph{snr}, the MMSE of the Gaussian input distribution and the MMSE of an arbitrary input distribution intersect at most once.
This property is stronger than required in order to prove the BC capacity region, however it is an interesting property on its own.

Motivated by this approach, our goal is to examine the properties of the MMSE matrix in the MIMO scenario, and relate
these properties to the mutual information. As an initial model we have
chosen the following simplified parallel channel,
\begin{eqnarray} \label{eq:generalModel}
\boldsymbol{Y} = \boldsymbol{H}\boldsymbol{X} + \boldsymbol{N}
\end{eqnarray}
where $\boldsymbol{X}$, $\boldsymbol{Y}$ and $\boldsymbol{N}$ are
\emph{n}-dimensional random vectors, and $\boldsymbol{N}$ is standard
Gaussian. $\boldsymbol{H}$ is assumed
diagonal and positive semidefinite. Note that $\boldsymbol{X}$ is not necessarily composed
of independent components, in which case the ``single crossing
point'' property can be deduced from the ``single crossing point''
property of the individual components.

As pointed out earlier, in the scalar channel scenario we have seen that a ``single crossing
point'' between the MMSE of the Gaussian input and the MMSE of an
arbitrary input distribution exists as a function of \emph{snr}.
The current more complex scenario, in which we have diagonal channel matrices, raises two questions:
What scalar function of the MMSE matrix should we examine for an analogous property to the ``single crossing point''?
And, along what $n \times n$-dimensional path should we look (since in our parallel MIMO scenario there are multiple $n \times n$-dimensional paths between every two channel matrices)? For consistency, even before answering these two questions, we would like to emphasize that the path will be parameterized through the scalar parameter $t$, thus instead of $\boldsymbol{H}$ we will write $\boldsymbol{R}(t)$, to avoid confusion.

The paper is organized as follows: section \ref{sec:definitions} contains the most basic definitions used in this work. Section \ref{sec:preliminaries} details our choice of path and gives some preliminary results used to prove our primary result. Section \ref{sec:singleCrossing} contains our main result, which is an extension of the single crossing point property to the parallel \emph{degraded} MIMO scenario. Section \ref{sec:connectionMutualInformation} connects the result of the previous section to the mutual information using the I-MMSE relationship. Finally, section \ref{sec:application} demonstrates how we can use this property to provide an alternative converse proof for the BC capacity region under covariance constraint in the parallel \emph{degraded} MIMO setting.


\section{Definitions} \label{sec:definitions}
We now formally give the definition of the MMSE matrix:
\begin{eqnarray} \label{eq:defineE}
\boldsymbol{E}(t) & = & \mathbb{E}\{ (\boldsymbol{X} - \mathbb{E}\{
\boldsymbol{X} | \boldsymbol{R}(t) \boldsymbol{X} + \boldsymbol{N}
\})  \nonumber \\ 
&  & (\boldsymbol{X} - \mathbb{E}\{ \boldsymbol{X} |
\boldsymbol{R}(t) \boldsymbol{X} + \boldsymbol{N} \})^T \} 
\end{eqnarray}
where $\boldsymbol{R}(t)$ corresponds to the channel matrix $\boldsymbol{H}$. The parameter $t$ determines the channel matrix, thus the new variable $\boldsymbol{R}(t)$ comes to highlight the dependence of the channel on the parameter $t$.

In this work we use the following I-MMSE
relationship derived by
Palomar and Verd$\acute{\textrm{u}}$ in \cite{Palomar}:
\begin{eqnarray} \label{eq:palomarVerdu}
\nabla_{\boldsymbol{H}}I(\boldsymbol{X};
\boldsymbol{H}\boldsymbol{X}+\boldsymbol{N}) =
\boldsymbol{H}\boldsymbol{E}
\end{eqnarray}
where $\boldsymbol{H}$ is a fixed known channel matrix, and
$\boldsymbol{N}$ is a standard Gaussian additive noise. Rewriting
this relationship as an integration along a path $\boldsymbol{R}(t)$
from $t=0$ to $t'$ results with the following expression:
\begin{eqnarray} \label{eq:lineIntegral}
I( \boldsymbol{X}; \boldsymbol{Y}(t')) & = & I( \boldsymbol{X}; \boldsymbol{R}(t') \boldsymbol{X} +
\boldsymbol{N})  \\  
& = & \int_{t=0}^{t'} \textrm{tr} \left( \left(
\boldsymbol{R}(t) \boldsymbol{E}(t) \right)^T  \boldsymbol{R}'(t)
\right) dt \nonumber
\end{eqnarray}
where $\boldsymbol{R}'(t) = \frac{d \boldsymbol{R}(t)}{dt}$.

In this work we specifically examine the properties of the
difference between the MMSE resulting from an arbitrary input
distribution and a Gaussian input distribution (not necessarily having the same covariance matrix). 
As such, we
require the following definition:
\begin{eqnarray} \label{eq:matrixQ}
\boldsymbol{Q}(t) & = & \boldsymbol{E}_G(t) - \boldsymbol{E}(t)
\end{eqnarray}
where we have denoted $\boldsymbol{E}_G(t)$ the MMSE matrix assuming a general Gaussian input distribution.

\section{Preliminaries} \label{sec:preliminaries}
We begin this section by presenting our choice of path, that is, we provide an answer to the second question presented in the Introduction: Along what $n \times n$-dimensional path should we look? In the scalar ``single crossing point'' property \cite{PROP,PROP_full}, the
MMSE is given as a function of \emph{snr}, a non-negative value. The
change in MMSE is examined as \emph{snr} monotonically increases. When switching to the MIMO scenario, our choice was to mimic the properties of the scalar scenario, that is, we show that there exists a
non-negative, monotonically non-decreasing path between any two
diagonal matrices $\boldsymbol{H}_1$ and $\boldsymbol{H}_2$, such
that $\boldsymbol{0} \preceq \boldsymbol{H}_1 \preceq
\boldsymbol{H}_2$, as given in the following lemma.

\newtheorem{LemmaPath}{Lemma}
\begin{LemmaPath} \label{th:LemmaPath}
For any two diagonal matrices $\boldsymbol{H}_1$ and
$\boldsymbol{H}_2$, such that $\boldsymbol{0} \preceq
\boldsymbol{H}_1 \preceq \boldsymbol{H}_2$, there exists a non-negative, monotonically non-decreasing path
$\boldsymbol{R}(t)$ for all $t
\in [0, 1]$ such that the following holds:
\begin{eqnarray} \label{eq:thPath}
\boldsymbol{R}(t=0) & = & \boldsymbol{0} \nonumber \\
\boldsymbol{R}(t_1) & = & \boldsymbol{H}_1 \nonumber \\
\textrm{and} \quad \boldsymbol{R}(t_2 = 1) & = & \boldsymbol{H}_2
\end{eqnarray}
where $0 < t_1 < t_2 = 1$.
\end{LemmaPath}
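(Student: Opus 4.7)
The plan is to interpret ``non-negative, monotonically non-decreasing'' entrywise (each diagonal entry of $\boldsymbol{R}(t)$ is to be non-negative and non-decreasing in $t$, equivalently $\boldsymbol{0} \preceq \boldsymbol{R}(t_a) \preceq \boldsymbol{R}(t_b)$ whenever $t_a \leq t_b$) and then simply exhibit a piecewise-linear path inside the cone of diagonal PSD matrices. Concretely, I would fix an arbitrary $t_1 \in (0,1)$, for definiteness $t_1 = 1/2$, and set
\begin{equation*}
\boldsymbol{R}(t) = \begin{cases} \dfrac{t}{t_1}\,\boldsymbol{H}_1, & t \in [0, t_1], \\[4pt] \boldsymbol{H}_1 + \dfrac{t - t_1}{1 - t_1}(\boldsymbol{H}_2 - \boldsymbol{H}_1), & t \in [t_1, 1]. \end{cases}
\end{equation*}
Since $\boldsymbol{H}_1$ and $\boldsymbol{H}_2$ are diagonal, $\boldsymbol{R}(t)$ is diagonal for every $t$, and each entry is continuous and piecewise linear. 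The three endpoint conditions $\boldsymbol{R}(0)=\boldsymbol{0}$, $\boldsymbol{R}(t_1)=\boldsymbol{H}_1$, and $\boldsymbol{R}(1)=\boldsymbol{H}_2$ are immediate.

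Next I would verify the two structural properties. For non-negativity, on $[0,t_1]$ the matrix $(t/t_1)\boldsymbol{H}_1$ is a non-negative scalar multiple of a PSD matrix, and on $[t_1,1]$ one may rewrite $\boldsymbol{R}(t)=\tfrac{1-t}{1-t_1}\boldsymbol{H}_1+\tfrac{t-t_1}{1-t_1}\boldsymbol{H}_2$, a convex combination of two PSD matrices; in both pieces $\boldsymbol{R}(t)\succeq\boldsymbol{0}$. For monotonicity, the (right) derivative equals $\tfrac{1}{t_1}\boldsymbol{H}_1 \succeq \boldsymbol{0}$ on $(0,t_1)$ and $\tfrac{1}{1-t_1}(\boldsymbol{H}_2 - \boldsymbol{H}_1) \succeq \boldsymbol{0}$ on $(t_1,1)$, where the latter uses the hypothesis $\boldsymbol{H}_1 \preceq \boldsymbol{H}_2$; hence each diagonal entry of $\boldsymbol{R}(t)$ is non-decreasing throughout $[0,1]$.

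The only mild technical point --- the closest thing to an ``obstacle'' --- is that this $\boldsymbol{R}(t)$ is merely piecewise differentiable, with a corner at $t=t_1$. This is harmless for the line-integral form of the I-MMSE relation in \req{eq:lineIntegral}, which requires no more than integrability of $\boldsymbol{R}'(t)$. Should a $C^1$ path be required by a later argument, one can smooth the corner in an arbitrarily small neighbourhood of $t_1$ while preserving componentwise monotonicity, or replace the two linear segments by any non-negative, non-decreasing $C^\infty$ interpolations with the same three pinned values.
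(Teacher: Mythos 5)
Your proof is correct and follows essentially the same route as the paper: the paper constructs the path as $\boldsymbol{R}(t)=\textrm{diag}\{g_1(t),\dots,g_n(t)\}$ with $g_i(t)=\int_0^t f_i(t')\,dt'$ for any non-negative $f_i$ whose areas over $[0,t_1]$ and $[t_1,t_2]$ equal $[\boldsymbol{H}_1]_{ii}$ and $[\boldsymbol{H}_2]_{ii}-[\boldsymbol{H}_1]_{ii}$, and your piecewise-linear path is exactly the special case of piecewise-constant $f_i$, with the same monotonicity/non-negativity verification. The remark about the corner at $t=t_1$ being harmless for the line integral is a fine (optional) addition.
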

\begin{proof} 
We need to define a function, $g_i(t)$, for each diagonal element $i$. It suffices to choose any non-negative function $f_i(t)$ such that the area from $0$ to $t_1$ will equal $[\boldsymbol{H}_1]_{ii}$ and the area from $t_1$ to $t_2$ will equal $[\boldsymbol{H}_2]_{ii} - [\boldsymbol{H}_1]_{ii}$. Given that, we can set the function to be $g_i(t) = \int_0^t f_i(t') dt'$. The entire path, $\boldsymbol{R}(t)$, will be given by:
\begin{eqnarray} \label{eq:path}
\boldsymbol{R}(t) = \textrm{diag}\{ g_1(t),...,g_n(t) \} \textrm{.}
\end{eqnarray}
As required, this path passes between the zero matrix at $t=0$,
$\boldsymbol{H}_1$ at $t_1$ and $\boldsymbol{H}_2$ at $t_2=1$. Since $f_i(t)$ are chosen non-negative for all $i$ we
have a non-negative and monotonically non-decreasing path for all $t \in [0, 1]$.
\end{proof}

We now turn to provide some preliminary results that will be shown central in the sequel.
\newtheorem{dlambdadX}[LemmaPath]{Lemma}
\begin{dlambdadX}[\mbox{\cite[Ch.~4, Sec.~11]{Magnus}}] \label{lem:dlambdadX}
Let $\lambda_i$ and $\vec{u}_i$ indicate the $i$-th eigenvalue (assumed of multiplicity 1) of the
matrix $\boldsymbol{Z}$ and its corresponding eigenvector, respectively. Then, it
follows that
\begin{gather}
\Jacob_{\boldsymbol{Z}} \lambda_i = \vec{u}_i^T \otimes \vec{u}_i^T 
\end{gather}
where $\Jacob$ is the Jacobian operator, whose definition can be found in \cite{Magnus}. 
\end{dlambdadX}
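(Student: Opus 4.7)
The plan is to differentiate the eigenvalue equation implicitly and then extract the Jacobian by reading off the coefficient of $\vecop(d\boldsymbol{Z})$. First I would write down the defining relations $\boldsymbol{Z}\vec{u}_i = \lambda_i \vec{u}_i$ together with the normalization $\vec{u}_i^T \vec{u}_i = 1$. Before computing anything I have to justify that differentiation is legitimate: the assumption that $\lambda_i$ has multiplicity one lets me apply the implicit function theorem to the characteristic polynomial (whose derivative with respect to $\lambda$ at $\lambda_i$ is nonzero precisely because the eigenvalue is simple) and, subsequently, to the defining linear system for $\vec{u}_i$. This guarantees that $\lambda_i$ and $\vec{u}_i$ are smooth functions of the entries of $\boldsymbol{Z}$ in a neighborhood of the matrix under consideration, so $\Jacob_{\boldsymbol{Z}}\lambda_i$ is well-defined.

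Next I would take the differential of both sides of the eigenvalue equation, obtaining $(d\boldsymbol{Z})\vec{u}_i + \boldsymbol{Z}(d\vec{u}_i) = (d\lambda_i)\vec{u}_i + \lambda_i(d\vec{u}_i)$, and pre-multiply by $\vec{u}_i^T$. For the symmetric matrices to which the lemma is applied in the sequel, $\vec{u}_i^T \boldsymbol{Z} = \lambda_i \vec{u}_i^T$, so the two terms involving $d\vec{u}_i$ on either side cancel exactly. This is the crucial algebraic step and it leaves the clean scalar identity $d\lambda_i = \vec{u}_i^T (d\boldsymbol{Z})\, \vec{u}_i$, which isolates the dependence of $\lambda_i$ on $\boldsymbol{Z}$.

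Finally, I would convert this scalar differential into the required Jacobian by applying the standard identity $\vecop(ABC) = (C^T \otimes A)\,\vecop(B)$ with $A = \vec{u}_i^T$, $B = d\boldsymbol{Z}$, $C = \vec{u}_i$, yielding $d\lambda_i = (\vec{u}_i^T \otimes \vec{u}_i^T)\,\vecop(d\boldsymbol{Z})$. Since by the definition of the Jacobian operator $d\lambda_i = (\Jacob_{\boldsymbol{Z}}\lambda_i)\,\vecop(d\boldsymbol{Z})$, identification gives $\Jacob_{\boldsymbol{Z}}\lambda_i = \vec{u}_i^T \otimes \vec{u}_i^T$, as claimed. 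The only genuinely delicate point in the whole argument is the differentiability question at the start; once the multiplicity-one hypothesis is in force, everything else is a short, routine manipulation of matrix differentials.
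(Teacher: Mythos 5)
Your derivation is correct, and it is worth noting that the paper itself offers no proof of this lemma at all: it is imported verbatim from Magnus and Neudecker (the cited chapter), so your argument essentially reconstructs the textbook proof rather than an alternative to anything in the paper. The steps are sound: simplicity of $\lambda_i$ gives local smoothness of $\lambda_i$ and $\vec{u}_i$ via the implicit function theorem, differentiating $\boldsymbol{Z}\vec{u}_i=\lambda_i\vec{u}_i$ and left-multiplying by $\vec{u}_i^T$ kills the $\d\vec{u}_i$ terms, and the identity $\vec{u}_i^T(\d\boldsymbol{Z})\vec{u}_i=(\vec{u}_i^T\otimes\vec{u}_i^T)\vecop(\d\boldsymbol{Z})$ plus the identification theorem for Jacobians gives the claim. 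The one point you rightly flag, and which deserves emphasis, is that the cancellation $\vec{u}_i^T\boldsymbol{Z}=\lambda_i\vec{u}_i^T$ requires $\boldsymbol{Z}$ symmetric (otherwise the correct formula involves the left eigenvector, $\Jacob_{\boldsymbol{Z}}\lambda_i=(\vec{v}_i^T\otimes\vec{u}_i^T)/(\vec{v}_i^T\vec{u}_i)$ in transposed-vec convention); the lemma as stated in the paper omits this hypothesis, but it is harmless here because every matrix to which it is applied ($\widetilde{\boldsymbol{Q}}(t)$, $\boldsymbol{B}^{1/2}(t)\boldsymbol{Q}(t)\boldsymbol{B}^{1/2}(t)$) is symmetric, and symmetry (plus unit normalization of $\vec{u}_i$) is exactly what the source theorem in Magnus assumes. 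So your proof is complete for the setting in which the lemma is actually used, and it makes explicit a hypothesis the paper leaves implicit.
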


\newtheorem{dlambdadt}{Corollary}
\begin{dlambdadt} \label{cor:dlambdadt}
If the matrix $\boldsymbol{Z}$ depends on a real scalar parameter $\tau$, \emph{i.e.}, $\boldsymbol{Z} =
\boldsymbol{Z}(\tau)$, then, assuming $\lambda_i$ is an eigenvalue of multiplicity 1, applying the
chain rule, we get:
\begin{align}
 \frac{\d \lambda_i}{\d \tau} &= \Jacob_{\boldsymbol{Z}(\tau)} \lambda_i \Jacob_{\tau} \boldsymbol{Z}(\tau)
\\ &= \big(\vec{u}_i^T \otimes \vec{u}_i^T\big) \vecop \left(
\boldsymbol{Z}'(\tau) \right) \\ &= \vec{u}_i^T \boldsymbol{Z}'(\tau)  \vec{u}_i
\end{align}
where the last equality follows from \cite[Ch.~2, Th.~2.2]{Magnus}.
\end{dlambdadt}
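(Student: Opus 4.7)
The plan is to derive the three displayed equalities one at a time, since the corollary is essentially a direct application of the chain rule to Lemma~\ref{lem:dlambdadX}, combined with a standard identity relating the vec operator and the Kronecker product.

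First I would observe that $\lambda_i$ depends on $\tau$ only through the matrix $\boldsymbol{Z}(\tau)$, so the composite map $\tau \mapsto \boldsymbol{Z}(\tau) \mapsto \lambda_i(\boldsymbol{Z}(\tau))$ is well defined and differentiable at points where $\lambda_i$ is a simple eigenvalue (so that its dependence on the entries of $\boldsymbol{Z}$ is smooth). Applying the chain rule for the Jacobian operator, as defined in \cite{Magnus}, yields the first equality
\begin{equation*}
\frac{\d \lambda_i}{\d \tau} = \Jacob_{\boldsymbol{Z}(\tau)} \lambda_i \; \Jacob_{\tau} \boldsymbol{Z}(\tau).
\end{equation*}

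Next I would substitute the two Jacobians. By Lemma~\ref{lem:dlambdadX}, the first factor equals $\vec{u}_i^T \otimes \vec{u}_i^T$. For the second factor, since $\tau$ is a real scalar, the Jacobian of the matrix-valued function $\boldsymbol{Z}(\tau)$ is obtained by stacking the derivatives of its entries, which by the definition in \cite{Magnus} is exactly $\vecop(\boldsymbol{Z}'(\tau))$. This produces the second equality
\begin{equation*}
\frac{\d \lambda_i}{\d \tau} = \big(\vec{u}_i^T \otimes \vec{u}_i^T\big) \vecop\bigl(\boldsymbol{Z}'(\tau)\bigr).
\end{equation*}

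Finally, I would invoke the standard identity $\vecop(ABC) = (C^T \otimes A) \vecop(B)$ (this is \cite[Ch.~2, Th.~2.2]{Magnus}) with $A = \vec{u}_i^T$, $B = \boldsymbol{Z}'(\tau)$, and $C = \vec{u}_i$. This gives $(\vec{u}_i^T \otimes \vec{u}_i^T)\vecop(\boldsymbol{Z}'(\tau)) = \vecop(\vec{u}_i^T \boldsymbol{Z}'(\tau) \vec{u}_i)$, and since $\vec{u}_i^T \boldsymbol{Z}'(\tau) \vec{u}_i$ is a scalar, the vec operator acts as the identity, yielding the claimed third equality. There is no real obstacle here: the only subtle point is the simplicity hypothesis on $\lambda_i$, which is needed to ensure the differentiability invoked in Lemma~\ref{lem:dlambdadX} and hence in the chain rule, and this has been explicitly assumed in the statement.
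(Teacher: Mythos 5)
Your proposal is correct and follows the paper's own route exactly: the chain rule for Jacobians, Lemma~\ref{lem:dlambdadX} for $\Jacob_{\boldsymbol{Z}}\lambda_i$, the identification $\Jacob_{\tau}\boldsymbol{Z}(\tau)=\vecop(\boldsymbol{Z}'(\tau))$, and the identity $\vecop(\mat{A}\mat{B}\mat{C})=(\mat{C}^T\otimes\mat{A})\vecop(\mat{B})$ from \cite[Ch.~2, Th.~2.2]{Magnus} applied with $\mat{A}=\vec{u}_i^T$, $\mat{B}=\boldsymbol{Z}'(\tau)$, $\mat{C}=\vec{u}_i$. Your remark on the simplicity hypothesis guaranteeing differentiability matches the paper's own caveat in the subsequent remark, so nothing is missing.
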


\newtheorem{dDdt}[dlambdadt]{Corollary}
\begin{dDdt} \label{cor:dDdt}
If, given a $\tau = \tau_0$, the matrix $\boldsymbol{Z}(\tau_0)$ is diagonal, we can always
take $[\vec{u}_i]_j = \delta_{ij}$ and, thus, the result in Corollary
\ref{cor:dlambdadt} particularizes to
\begin{gather}
\left. \frac{\d \lambda_i(\boldsymbol{Z}(\tau))}{\d \tau} \right|_{\tau = \tau_0} = \left. \frac{\d
[\boldsymbol{Z}(\tau)]_{ii}}{\d \tau} \right|_{\tau = \tau_0} .
\end{gather}
\end{dDdt}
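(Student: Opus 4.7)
The plan is to invoke Corollary \ref{cor:dlambdadt} directly and exploit the specific structure of $\boldsymbol{Z}(\tau_0)$. Since $\boldsymbol{Z}(\tau_0)$ is diagonal, its diagonal entries are precisely its eigenvalues, and the canonical basis vectors $\vec{e}_i$ (defined by $[\vec{e}_i]_j = \delta_{ij}$) form a complete set of orthonormal eigenvectors, with $\boldsymbol{Z}(\tau_0) \vec{e}_i = [\boldsymbol{Z}(\tau_0)]_{ii} \vec{e}_i$. Under the multiplicity-one hypothesis inherited from Corollary \ref{cor:dlambdadt}, these are (up to sign) the unique unit eigenvectors of $\boldsymbol{Z}(\tau_0)$, so it is legitimate to take $\vec{u}_i = \vec{e}_i$ for the purpose of evaluating the derivative at $\tau_0$.

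Next, I would substitute $\vec{u}_i = \vec{e}_i$ into the expression $\frac{\d \lambda_i}{\d \tau} = \vec{u}_i^T \boldsymbol{Z}'(\tau) \vec{u}_i$ provided by Corollary \ref{cor:dlambdadt}, evaluated at $\tau = \tau_0$. A direct computation yields
\begin{equation*}
\left. \frac{\d \lambda_i(\boldsymbol{Z}(\tau))}{\d \tau} \right|_{\tau = \tau_0} = \vec{e}_i^T \boldsymbol{Z}'(\tau_0) \vec{e}_i = [\boldsymbol{Z}'(\tau_0)]_{ii} = \left. \frac{\d [\boldsymbol{Z}(\tau)]_{ii}}{\d \tau} \right|_{\tau = \tau_0},
\end{equation*}
where the last equality simply commutes the differentiation with the extraction of the $(i,i)$ entry, which is valid because the $(i,i)$-entry map is a fixed linear functional on matrices.

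The only delicate point, and what I would flag as the main (mild) obstacle, is the multiplicity-one assumption: if two diagonal entries of $\boldsymbol{Z}(\tau_0)$ coincide, then $\lambda_i$ is not a smooth function of $\tau$ in general, and the choice of eigenvector within the repeated eigenspace is not unique. However, even in that case one can still select $\vec{u}_i = \vec{e}_i$ as a valid eigenvector, and for perturbations $\boldsymbol{Z}'(\tau_0)$ that are themselves diagonal at $\tau_0$ (which is the situation of interest in the sequel of the paper) the standard basis diagonalizes the perturbation as well, so the identity continues to hold. Under the stated hypothesis of simple eigenvalues, no such care is needed, and the corollary follows in one line from Corollary \ref{cor:dlambdadt}.
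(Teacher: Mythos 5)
Your proposal is correct and matches the paper's (implicit) argument exactly: the corollary is obtained precisely by substituting the canonical eigenvectors $\vec{u}_i = \vec{e}_i$ into the formula $\frac{\d \lambda_i}{\d \tau} = \vec{u}_i^T \boldsymbol{Z}'(\tau)\vec{u}_i$ of Corollary \ref{cor:dlambdadt}. Your closing caveat about repeated eigenvalues is also the same point the paper handles in the Remark following the corollary, by appealing to \cite[Ch.~8, Sec.~12, Th.~13]{Magnus}.
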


\begin{rem}
Observe that the results in Lemma \ref{lem:dlambdadX} and Corollary
\ref{cor:dlambdadt} are valid only for the case where the multiplicity of
$\lambda_i$ is equal to 1. However, as explained in \cite[Ch.~8,
Sec.~12]{Magnus} and formally stated in \cite[Ch.~8, Sec.~12,
Th.~13]{Magnus}, in our case, the result in Corollary \ref{cor:dDdt} can be applied
directly to the case where the multiplicity of $\lambda_i$ is greater
than 1.



\end{rem}

\section{Single Crossing Point for each Eigenvalue} \label{sec:singleCrossing}
Before stating our primary result we require a lower bound on the matrix $\boldsymbol{Q}(t)$, defined in (\ref{eq:matrixQ}), which is given in the following lemma.
\newtheorem{LemmaLowerBound}[LemmaPath]{Lemma}
\begin{LemmaLowerBound} \label{th:LemmaLowerBound}
The following lower bound holds:
\begin{eqnarray} \label{eq:LemmaLowerBound}
\boldsymbol{Q}'(t) \succeq 2 \left( \boldsymbol{E}(t)
\boldsymbol{B}(t) \boldsymbol{E}^T(t) - \boldsymbol{E}_G(t)
\boldsymbol{B}(t) \boldsymbol{E}_G^T(t) \right)
\end{eqnarray}
where $\boldsymbol{B}(t) =
\boldsymbol{R}(t) \boldsymbol{R}'(t)$ is a diagonal matrix. 
\end{LemmaLowerBound}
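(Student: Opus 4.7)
The plan is to split $\boldsymbol{Q}'(t) = \boldsymbol{E}_G'(t) - \boldsymbol{E}'(t)$ into two pieces and handle each separately, obtaining an exact expression for $\boldsymbol{E}_G'(t)$ and a positive-semidefinite upper bound for $\boldsymbol{E}'(t)$; subtracting the two then delivers the claim.

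The central step, which I expect to be the main obstacle, is to derive the matrix identity
\begin{equation*}
\boldsymbol{E}'(t) = -2\,\mathbb{E}_{\boldsymbol{Y}}\!\left[\boldsymbol{\Phi}(\boldsymbol{Y};t)\,\boldsymbol{B}(t)\,\boldsymbol{\Phi}(\boldsymbol{Y};t)\right],
\end{equation*}
valid for any input distribution on $\boldsymbol{X}$, where $\boldsymbol{\Phi}(\boldsymbol{y};t)$ denotes the conditional covariance matrix of $\boldsymbol{X}$ given $\boldsymbol{Y}=\boldsymbol{y}$ at channel $\boldsymbol{R}(t)$. This is the matrix-path analogue of the scalar Guo--Shamai--Verd\'u MMSE derivative $\frac{d\,\mathrm{mmse}}{d\,\mathrm{snr}} = -\mathbb{E}[\mathrm{Var}(X\mid Y)^2]$. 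I would derive it by differentiating in $t$ the posterior density $p(\boldsymbol{x}\mid\boldsymbol{y};t) \propto p(\boldsymbol{x})\exp\!\left\{-\tfrac{1}{2}\|\boldsymbol{y}-\boldsymbol{R}(t)\boldsymbol{x}\|^2\right\}$ and invoking Stein/score-function manipulations that are enabled by the Gaussian structure of the noise; the diagonality of $\boldsymbol{R}(t)$ and $\boldsymbol{R}'(t)$ keeps $\boldsymbol{B}(t)=\boldsymbol{R}(t)\boldsymbol{R}'(t)$ diagonal and, by Lemma \ref{th:LemmaPath}, positive semidefinite.

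Granted this identity, the Gaussian case is immediate: $\boldsymbol{\Phi}$ does not depend on $\boldsymbol{Y}$ and equals $\boldsymbol{E}_G(t)$, yielding
\begin{equation*}
\boldsymbol{E}_G'(t) = -2\,\boldsymbol{E}_G(t)\,\boldsymbol{B}(t)\,\boldsymbol{E}_G(t).
\end{equation*}
For an arbitrary input, I would lower bound $\mathbb{E}[\boldsymbol{\Phi}\boldsymbol{B}\boldsymbol{\Phi}]$ by $\boldsymbol{E}\boldsymbol{B}\boldsymbol{E}$ via a matrix Jensen inequality. Writing $\boldsymbol{B}(t) = \boldsymbol{B}^{1/2}(t)\boldsymbol{B}^{1/2}(t)$, which is possible since $\boldsymbol{B}(t)\succeq\boldsymbol{0}$, convexity of $\boldsymbol{u}\mapsto\|\boldsymbol{u}\|^2$ gives, for every vector $\boldsymbol{v}$,
\begin{align*}
\boldsymbol{v}^T\mathbb{E}[\boldsymbol{\Phi}\boldsymbol{B}\boldsymbol{\Phi}]\boldsymbol{v}
&= \mathbb{E}\!\left[\|\boldsymbol{B}^{1/2}\boldsymbol{\Phi}(\boldsymbol{Y};t)\boldsymbol{v}\|^2\right] \\
&\geq \|\boldsymbol{B}^{1/2}\mathbb{E}[\boldsymbol{\Phi}(\boldsymbol{Y};t)]\boldsymbol{v}\|^2 \\
&= \boldsymbol{v}^T\boldsymbol{E}(t)\boldsymbol{B}(t)\boldsymbol{E}(t)\boldsymbol{v},
\end{align*}
using the symmetry of $\boldsymbol{\Phi}$ and $\boldsymbol{E}$ together with $\mathbb{E}[\boldsymbol{\Phi}(\boldsymbol{Y};t)]=\boldsymbol{E}(t)$. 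Hence $\boldsymbol{E}'(t)\preceq -2\,\boldsymbol{E}(t)\boldsymbol{B}(t)\boldsymbol{E}(t)$.

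Subtracting the two bounds and using that $\boldsymbol{E}$ and $\boldsymbol{E}_G$ are symmetric gives
\begin{equation*}
\boldsymbol{Q}'(t) \succeq 2\left(\boldsymbol{E}(t)\boldsymbol{B}(t)\boldsymbol{E}^T(t) - \boldsymbol{E}_G(t)\boldsymbol{B}(t)\boldsymbol{E}_G^T(t)\right),
\end{equation*}
which is the asserted inequality. The only nontrivial ingredient is the differential identity in the first step; the Jensen bound and the final subtraction are routine.
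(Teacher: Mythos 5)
Your proposal is correct and follows essentially the same route as the paper: the derivative identity you flag as the main obstacle, $\boldsymbol{E}'(t) = -2\,\mathbb{E}\{\boldsymbol{\phi}_X(\boldsymbol{Y})\boldsymbol{B}(t)\boldsymbol{\phi}_X(\boldsymbol{Y})^T\}$, is precisely what the paper imports from Palomar and Verd\'u (\cite[eq.~(131)]{Palomar2}, written there column-wise over the diagonal entries of $\boldsymbol{B}(t)$), so it need not be rederived. The Gaussian specialization $\boldsymbol{\phi}_X = \boldsymbol{E}_G(t)$ and the Jensen step (which the paper applies per column with weights $\boldsymbol{B}_{ll}\ge 0$, equivalent to your $\boldsymbol{B}^{1/2}$ quadratic-form argument) match the paper's proof.
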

\begin{proof}
We first provide the derivative of the MMSE with respect to the parameter $t$.
Using the chain rule given in \cite[equations (65-66)]{Palomar2},
\begin{eqnarray} \label{eq:chainRule}
D_t \boldsymbol{E}_{ij}(t) &=& D_H \boldsymbol{E}_{ij}(t) D_t \boldsymbol{R}(t) \nonumber \\
&=& \textrm{tr}\left( \frac{\partial \boldsymbol{E}_{ij}(t)}{\partial
\boldsymbol{R}(t)}^T \boldsymbol{R}'(t) \textrm{.} 
\right)
\end{eqnarray}
Since $\boldsymbol{R}(t)$ is diagonal, the last expression can be
further simplified to
\begin{eqnarray} \label{eq:chainRuleDiagGeneralized}
D_t \boldsymbol{E}_{ij}(t) & = & \sum_{l} \frac{\partial
\boldsymbol{E}_{ij}(t)}{\partial \boldsymbol{R}_{ll}(t)}
\boldsymbol{R}'_{ll}(t) \textrm{.} 
\end{eqnarray}
Using the result (\cite[eq. (131)]{Palomar2}),
\begin{eqnarray} \label{eq:resultPalomar2_generalGaussian}
D_{\boldsymbol{R}_{ll}(t)} \boldsymbol{E}_{ij}(t) & = & - \mathbb{E}
\{ \boldsymbol{\phi}_X (\boldsymbol{Y})_{jl}\left[ \boldsymbol{\phi}_X(\boldsymbol{Y})
\boldsymbol{R}(t)^T\right]_{il} \nonumber \\
& & + \boldsymbol{\phi}_X (\boldsymbol{Y})_{il}\left[
\boldsymbol{\phi}_X(\boldsymbol{Y})
\boldsymbol{R}(t)^T\right]_{jl} \} \nonumber \\
& = & - \mathbb{E} \{ \boldsymbol{\phi}_X (\boldsymbol{Y})_{jl}
\boldsymbol{\phi}_X(\boldsymbol{Y})_{il} \boldsymbol{R}(t)_{ll} \nonumber \\
& & + \boldsymbol{\phi}_X
(\boldsymbol{Y})_{il} \boldsymbol{\phi}_X(\boldsymbol{Y})_{jl}
\boldsymbol{R}(t)_{ll} \} \nonumber \\
& = & -2 \boldsymbol{R}_{ll}(t) \mathbb{E} \{ \boldsymbol{\phi}_X
(\boldsymbol{Y})_{jl} \boldsymbol{\phi}_X (\boldsymbol{Y})_{il} \}
\end{eqnarray}
where
\begin{eqnarray} \label{eq:phi}
\boldsymbol{\phi}_X(\boldsymbol{y}) = \mathbb{E} \{ (\boldsymbol{X} -
\mathbb{E} \{ \boldsymbol{X}|\boldsymbol{y}\})(\boldsymbol{X} -
\mathbb{E} \{ \boldsymbol{X}|\boldsymbol{y}\})^T |\boldsymbol{y} \} \textrm{.}
\end{eqnarray}
Note that
$\boldsymbol{\phi}_X(\boldsymbol{y})$ depends on $t$ through $\boldsymbol{Y}(t) =
\boldsymbol{R}(t)\boldsymbol{X} + \boldsymbol{N}$. The second
equality in equation (\ref{eq:resultPalomar2_generalGaussian}) is due to the fact
that $\boldsymbol{R}(t)$ is diagonal. Thus, we can write the derivative of $\boldsymbol{E}_{ij}(t)$ as
\begin{eqnarray} \label{eq:derivativeEij}
D_t \boldsymbol{E}_{ij}(t) & = & -2 \sum_{l}  \boldsymbol{R}_{ll}(t) \mathbb{E} \{ \boldsymbol{\phi}_X (\boldsymbol{Y})_{jl} \boldsymbol{\phi}_X (\boldsymbol{Y})_{il} \} \boldsymbol{R}'_{ll}(t) \nonumber \\
& = & -2 \sum_{l}  \boldsymbol{R}_{ll}(t) \boldsymbol{R}'_{ll}(t)
\mathbb{E} \{ \boldsymbol{\phi}_X (\boldsymbol{Y})_{jl} \boldsymbol{\phi}_X
(\boldsymbol{Y})_{il} \} \nonumber \\
& = & -2 \sum_{l} \boldsymbol{B}_{ll}(t) \mathbb{E} \{ \boldsymbol{\phi}_X
(\boldsymbol{Y})_{jl} \boldsymbol{\phi}_X (\boldsymbol{Y})_{il} \}
\end{eqnarray}
recalling that $\boldsymbol{B}_{ll}(t) =
\boldsymbol{R}_{ll}(t) \boldsymbol{R}'_{ll}(t)$. We can put this
expression into a matrix form as follows:
\begin{eqnarray} \label{eq:derivative_matrixE}
D_t \boldsymbol{E}(t) = -2 \sum_{l} \boldsymbol{B}_{ll}(t)
\mathbb{E} \{ \boldsymbol{\phi}_X (\boldsymbol{Y})_l \boldsymbol{\phi}_X
(\boldsymbol{Y})_{l}^T \}
\end{eqnarray}
where $\boldsymbol{\phi}_X (\boldsymbol{Y})_{l}$ is the $l^{th}$ column of the
matrix $\boldsymbol{\phi}_X (\boldsymbol{Y})$. Using the fact that for a
Gaussian input distribution $\boldsymbol{\phi}_X(\boldsymbol{Y})$ does not depend
on $\boldsymbol{Y}$ and, thus, $\boldsymbol{\phi}_X(\boldsymbol{Y}) = \mathbb{E} \{
\boldsymbol{\phi}_X(\boldsymbol{Y})  \} =\boldsymbol{E}^{G}(t)$
\cite{Palomar2}, we can obtain the following lower bound on the derivative of
the matrix $\boldsymbol{Q}(t)$:
\begin{align} 
& \boldsymbol{Q}'(t) \nonumber \\
& = 2 \sum_{l} \boldsymbol{B}_{ll}(t) \left( \mathbb{E} \{ \boldsymbol{\phi}_X
(\boldsymbol{Y})_l \boldsymbol{\phi}_X (\boldsymbol{Y})_{l}^T \} - \boldsymbol{E}^G_l (\boldsymbol{E}^G_l)^T \right) \nonumber \\
& \succeq 2 \sum_{l} \boldsymbol{B}_{ll}(t) \left( \mathbb{E} \{
\boldsymbol{\phi}_X
(\boldsymbol{Y})_l \} \mathbb{E} \{ \boldsymbol{\phi}_X (\boldsymbol{Y})_{l} \}^T - \boldsymbol{E}^G_l (\boldsymbol{E}^G_l)^T \right) \nonumber \\
& = 2 \sum_{l} \boldsymbol{B}_{ll}(t) \left( \boldsymbol{E}_l \boldsymbol{E}^T_l  - \boldsymbol{E}^G_l (\boldsymbol{E}^G_l)^T \right) \nonumber \\
& = 2 \left( \boldsymbol{E}(t) \boldsymbol{B}(t)
\boldsymbol{E}^T(t) - \boldsymbol{E}_G(t) \boldsymbol{B}(t)
\boldsymbol{E}_G^T(t) \right) \nonumber
\end{align}
where the inequality is due to Jensen. 
\end{proof}

Let us fix $t_0 \geq 0$ and consider the generalized eigenvalue decomposition \cite{GeneralizedEig} on $( \boldsymbol{E}_G(t_0), \boldsymbol{E}(t_0) )$. \footnote{In the generalized eigenvalue decomposition we have considered that $\boldsymbol{E}_G \succ \boldsymbol{0}$. A sufficient condition for $\boldsymbol{E}_G \succ \boldsymbol{0}$ is that the covariance of the Gaussian input distribution is non-singular and that $\boldsymbol{R}(t)$ is non-singular.} Thus, there exists an invertible matrix $\boldsymbol{V}_0$ such that,
\begin{eqnarray} \label{eq:GED}
\boldsymbol{E}_G(t_0) & = & \boldsymbol{V}_0^T \boldsymbol{V}_0 \nonumber \\
\boldsymbol{E}(t_0) & = & \boldsymbol{V}_0^T \boldsymbol{\Sigma}_0 \boldsymbol{V}_0
\end{eqnarray}
where $\boldsymbol{\Sigma}_0$ is a positive semi-definite diagonal matrix.
Thus,
\begin{eqnarray} \label{eq:GED2}
\boldsymbol{Q}(t_0) = \boldsymbol{E}_G(t_0) - \boldsymbol{E}(t_0) =
\boldsymbol{V}_0^T \left( \boldsymbol{I} - \boldsymbol{\Sigma}_0
\right) \boldsymbol{V}_0
\end{eqnarray}
and the following matrix:
\begin{eqnarray} \label{eq:Qtilde}
\boldsymbol{\widetilde{Q}}(t_0) =  \boldsymbol{V}_0^{-T} \boldsymbol{Q}(t_0) \boldsymbol{V}_0^{-1}
\end{eqnarray}
is diagonal.
By defining $\boldsymbol{C}_0 = \boldsymbol{V}_0 \boldsymbol{B}(t_0) \boldsymbol{V}_0^T$, we can rewrite the lower bound attained in Lemma \ref{th:LemmaLowerBound} as follows:
\begin{eqnarray} \label{eq:lowerBoundRewritten}
\boldsymbol{Q}'(t) \succeq 2 \boldsymbol{V}_0^T \left( \boldsymbol{\Sigma}_0 \boldsymbol{C}_0 \boldsymbol{\Sigma}_0 - \boldsymbol{C}_0 \right) \boldsymbol{V}_0 \textrm{.}
\end{eqnarray}

Our main result is the following:
\newtheorem{TheoremSingleCrossingEig}{Theorem}
\begin{TheoremSingleCrossingEig} \label{th:TheoremSingleCrossingEig}
Each eigenvalue of $\boldsymbol{Q}(t)$ crosses
 the horizontal axis at most once.
\end{TheoremSingleCrossingEig}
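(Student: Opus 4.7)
The strategy is to show that whenever an eigenvalue $\lambda_i(\boldsymbol{Q}(t))$ touches zero at some $t_0$, its derivative at $t_0$ is non-negative. Since the eigenvalue can then never descend through the horizontal axis, the desired single-crossing statement follows.

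To carry out the pointwise calculation, fix $t_0$ with $\lambda_i(\boldsymbol{Q}(t_0))=0$ and let $\vec{u}_i$ be a unit eigenvector. Applying Corollary~\ref{cor:dlambdadt} (with the remark following Corollary~\ref{cor:dDdt} covering eigenvalue multiplicities greater than one) combined with Lemma~\ref{th:LemmaLowerBound}, and using the symmetry of $\boldsymbol{E}(t)$ and $\boldsymbol{E}_G(t)$, I obtain
\begin{align*}
\frac{d\lambda_i}{dt}\bigg|_{t_0} &= \vec{u}_i^{T}\boldsymbol{Q}'(t_0)\,\vec{u}_i \\
&\geq 2\,(\boldsymbol{E}(t_0)\vec{u}_i)^{T}\boldsymbol{B}(t_0)(\boldsymbol{E}(t_0)\vec{u}_i) \\
&\quad - 2\,(\boldsymbol{E}_G(t_0)\vec{u}_i)^{T}\boldsymbol{B}(t_0)(\boldsymbol{E}_G(t_0)\vec{u}_i).
\end{align*}
The zero-crossing condition $\boldsymbol{Q}(t_0)\vec{u}_i=\boldsymbol{0}$ rewrites as $\boldsymbol{E}(t_0)\vec{u}_i = \boldsymbol{E}_G(t_0)\vec{u}_i$, so the two quadratic forms above coincide and cancel, yielding $\lambda_i'(t_0)\geq 0$. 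The same conclusion reads neatly after the congruence performed in~(\ref{eq:GED})--(\ref{eq:lowerBoundRewritten}): $\lambda_i(t_0)=0$ forces $\boldsymbol{V}_0\vec{u}_i$ to live on the coordinates where $[\boldsymbol{\Sigma}_0]_{jj}=1$, so that it is fixed by $\boldsymbol{\Sigma}_0$ and the right-hand side of~(\ref{eq:lowerBoundRewritten}) vanishes on it.

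The step I expect to be the main obstacle is promoting the pointwise inequality $\lambda_i'(t_0)\geq 0$ at every zero into a genuine global single-crossing statement, since a weak inequality does not by itself preclude an eigenvalue that kisses the axis from above and then re-enters the lower half. My plan is a continuity-and-contradiction argument: if $\lambda_i(t_1)>0$ and $\lambda_i(t_2)<0$ for some $t_1<t_2$, set $t^{*}=\inf\{t\in[t_1,t_2]:\lambda_i(t)<0\}$, use continuity of eigenvalues to conclude $\lambda_i(t^{*})=0$, and derive a contradiction with the non-negative right derivative at $t^{*}$ granted by the pointwise argument above. If this route turns out to be too delicate in the degenerate case $\lambda_i'(t^{*})=0$, a natural fallback is to substitute $\boldsymbol{E}=\boldsymbol{E}_G-\boldsymbol{Q}$ inside Lemma~\ref{th:LemmaLowerBound} and extract a Gronwall-type differential inequality of the form $\lambda_i'(t)\geq -c(t)\,\lambda_i(t) + O(\lambda_i(t)^{2})$, from which sign preservation on $[t_0,\infty)$ follows directly.
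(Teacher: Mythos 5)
Your pointwise computation at a touching point is correct, but, as you yourself anticipate, it is exactly where the argument breaks: at a zero eigenvalue the lower bound of Lemma \ref{th:LemmaLowerBound} degenerates, since $\boldsymbol{Q}(t^*)\vec{u}_i=\boldsymbol{0}$ makes the two quadratic forms cancel, so all you get is $\lambda_i'(t^*)\geq 0$ with the bound equal to zero. A non-negative (typically vanishing) one-sided derivative at $t^*$ does not contradict the eigenvalue being negative immediately afterwards (think of $-(t-t^*)^2$ or $-(t-t^*)^3$), so the continuity-and-contradiction route you propose as the main argument does not close; this is a genuine gap, not a mere delicacy. The paper closes it differently: it applies the lower bound at \emph{every} $t_0$, after the congruence $\widetilde{\boldsymbol{Q}}(t)=\boldsymbol{V}_0^{-T}\boldsymbol{Q}(t)\boldsymbol{V}_0^{-1}$ which makes $\widetilde{\boldsymbol{Q}}(t_0)=\boldsymbol{I}-\boldsymbol{\Sigma}_0$ diagonal, and shows that \emph{all non-positive} eigenvalues of the transformed matrix are locally non-decreasing (the bound $2[\boldsymbol{C}_0]_{ii}([\boldsymbol{\Sigma}_0]_{ii}^2-1)\geq 0$ when $[\boldsymbol{\Sigma}_0]_{ii}\geq 1$). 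By Sylvester's law of inertia the number of negative eigenvalues of $\boldsymbol{Q}(t)$ therefore cannot increase, and a counting/contradiction argument then excludes a zero eigenvalue turning negative. Note that this extra control over the strictly negative eigenvalues is obtained for $\widetilde{\boldsymbol{Q}}$, not for $\boldsymbol{Q}$ itself, and that the formula $\lambda_i'=\vec{u}_i^T\boldsymbol{Q}'\vec{u}_i$ is justified by Corollary \ref{cor:dlambdadt} only for simple eigenvalues; the multiplicity remark applies to the diagonalized form of Corollary \ref{cor:dDdt}, which is precisely why the paper differentiates after the congruence.

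Your Gronwall fallback is the right kind of repair and is a genuinely different route from the paper's inertia/counting argument: substituting $\boldsymbol{E}=\boldsymbol{E}_G-\boldsymbol{Q}$ into Lemma \ref{th:LemmaLowerBound} and using $\boldsymbol{Q}\vec{u}_i=\lambda_i\vec{u}_i$ gives $\lambda_i'\geq -4\lambda_i\,\vec{u}_i^T\boldsymbol{B}\boldsymbol{E}_G\vec{u}_i+2\lambda_i^2\,\vec{u}_i^T\boldsymbol{B}\vec{u}_i$, and an integrating factor then yields sign preservation after a zero (only boundedness of the coefficient matters, not its sign). If executed, this would give a branch-by-branch, more quantitative proof that bypasses Sylvester's law entirely. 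But as written it is a plan rather than a proof: you need the eigenvalue branch to be differentiable (at least almost everywhere) with the eigenvector formula valid along it, including at points where eigenvalues collide, together with a measurable, bounded selection of $\vec{u}_i(t)$ -- exactly the perturbation-theoretic technicalities that the paper's congruence-plus-inertia construction is designed to avoid. So: the key inequality and the diagnosis of the obstacle are right, but the crossing argument as proposed fails in the degenerate case that actually occurs, and the sketched repair still needs these details filled in before it constitutes a complete proof.
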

\begin{proof}
Consider the new matrix function $\widetilde{\boldsymbol{Q}}(t) = \boldsymbol{V}_0^{-T}
\boldsymbol{Q}(t) \boldsymbol{V}_0^{-1}$, which, from Sylvester's law of inertia, has the
same number of positive, negative, and zero eigenvalues as $\boldsymbol{Q}(t)$.

Note that $\widetilde{\boldsymbol{Q}}(t_0) = \boldsymbol{I} -
\boldsymbol{\Sigma}_0$ is a diagonal matrix and, thus, we can apply Corollary
\ref{cor:dDdt} to obtain a lower bound on the derivative of the
eigenvalues of $\widetilde{\boldsymbol{Q}}(t)$ evaluated at $t_0$:
\begin{align}
\left. \frac{\d \lambda_i \big( \widetilde{\boldsymbol{Q}}(t) \big)}{\d t} \right|_{t
= t_0} &= \left. \frac{\d \big[ \widetilde{\boldsymbol{Q}}(t) \big]_{ii}}{\d t}
\right|_{t
= t_0} 
= \left[\boldsymbol{V}_0^{-T} \boldsymbol{Q}'(t_0) \boldsymbol{V}_0^{-1}\right]_{ii} \nonumber \\
& \geq 2 ( [\boldsymbol{\Sigma}_0]_{ii} [\boldsymbol{C}_0]_{ii} [\boldsymbol{\Sigma}_0]_{ii}
- [\boldsymbol{C}_0]_{ii} ) \label{eq:bounddlambda}
\end{align}
where in (\ref{eq:bounddlambda}) we applied the lower bound given in Lemma
\ref{th:LemmaLowerBound}. 

Now, let us particularize the bound obtained in (\ref{eq:bounddlambda}) to the
non-positive eigenvalues of $\widetilde{\boldsymbol{Q}}(t_0)$, i.e., those that fulfill
that $\lambda_i \big( \widetilde{\boldsymbol{Q}}(t) \big) \leq 0$ which implies that
$[\boldsymbol{\Sigma}_0]_{ii} \geq 1$, from which it follows that
\begin{eqnarray}
\left. \frac{\d
\lambda_i \big( \widetilde{\mat{Q}}(t) \big)}{\d t} \right|_{t
= t_0} \geq 2 ( [\boldsymbol{\Sigma}_0]_{ii} [\boldsymbol{C}_0]_{ii} [\boldsymbol{\Sigma}_0]_{ii}
- [\boldsymbol{C}_0]_{ii} )
\geq 0
\end{eqnarray}
where we have used the fact that $\boldsymbol{B}(t) \succeq \boldsymbol{0}$.

The last result implies that, in a sufficiently small neighborhood of $t_0$, the
non-positive eigenvalues of $\widetilde{\boldsymbol{Q}}(t)$ are non-decreasing 
functions of $t$. Consequently, from the continuity of the eigenvalues, the number of negative eigenvalues of $\widetilde{\boldsymbol{Q}}(t)$ cannot increase.


Now, taking into account that the number of positive,
zero, and negative eigenvalues is preserved under the transformation
$\widetilde{\boldsymbol{Q}}(t) \mapsto \boldsymbol{Q}(t)$ we will informally show that a zero eigenvalue of $\boldsymbol{Q}(t)$
cannot become negative (a complete formal proof is given in \cite{full}). This will prove that each eigenvalue of $\boldsymbol{Q}(t)$ crosses the horizontal axis at most once.
We will prove this result by contradiction. Let's assume that there is a zero eigenvalue of $\boldsymbol{Q}(t_0)$ that becomes negative for $t > t_0$. Since the number of negative eigenvalues cannot increase, there must be at least one negative eigenvalue at $t_0$ that increases to zero.
If we examine the sign of the eigenvalues at $t_0 + \Delta$ for a sufficiently small $\Delta$,
we know that the zero eigenvalue has to be negative, however the negative eigenvalue (for sufficiently small $\Delta$)
is also still negative. Thus, we will have an increase in the number of negative eigenvalues, contradicting the property of no
increase in negative eigenvalues of $\widetilde{\boldsymbol{Q}}(t)$.
This shows that a zero eigenvalue cannot become negative in $\boldsymbol{Q}(t_0)$.
\end{proof}

The following corollary is a simple consequence from Theorem
\ref{th:TheoremSingleCrossingEig}.

\newtheorem{CorollarySingleCrossingEig}{Corollary}
\begin{CorollarySingleCrossingEig} \label{th:CorollarySingleCrossingEig}
If, for a given $t'$, the function $\boldsymbol{Q}(t')$ fulfils that $\boldsymbol{Q}(t') \succeq \boldsymbol{0}$ then
for all $t \geq t'$ we also have that $\boldsymbol{Q}(t) \succeq \boldsymbol{0}$.
\end{CorollarySingleCrossingEig}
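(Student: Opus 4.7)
The plan is to derive the corollary as an immediate consequence of Theorem \ref{th:TheoremSingleCrossingEig} together with the intermediate facts already assembled in its proof. Theorem \ref{th:TheoremSingleCrossingEig} only tells us that each eigenvalue of $\boldsymbol{Q}(t)$ crosses zero at most once; by itself, such a crossing could \emph{a priori} go in either direction. However, the theorem's proof actually establishes the stronger one-sided statement: (i) the number of negative eigenvalues of $\boldsymbol{Q}(t)$ is locally non-increasing, and (ii) a zero eigenvalue of $\boldsymbol{Q}(t)$ cannot become negative. Taken together, these two facts rule out any downward crossing of the horizontal axis.

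First I would assume the hypothesis $\boldsymbol{Q}(t') \succeq \boldsymbol{0}$, which means every eigenvalue of $\boldsymbol{Q}(t')$ is non-negative. I would then argue by contradiction: suppose there exists some $t^* > t'$ and some index $i$ with $\lambda_i(\boldsymbol{Q}(t^*)) < 0$. By the continuity of the eigenvalues as functions of $t$ (which is already invoked implicitly in the proof of Theorem \ref{th:TheoremSingleCrossingEig}), and since $\lambda_i(\boldsymbol{Q}(t')) \geq 0$, the intermediate value theorem yields a point $t_1 \in [t', t^*)$ at which $\lambda_i(\boldsymbol{Q}(t_1)) = 0$ while $\lambda_i(\boldsymbol{Q}(t_1 + \Delta)) < 0$ for arbitrarily small positive $\Delta$. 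This is precisely the forbidden transition: a zero eigenvalue becoming negative, which is exactly the case ruled out in the final portion of the proof of Theorem \ref{th:TheoremSingleCrossingEig}. The contradiction forces $\boldsymbol{Q}(t) \succeq \boldsymbol{0}$ for every $t \geq t'$.

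The only delicate point is the handling of eigenvalue multiplicities, since the labeling $\lambda_i(t)$ must be made consistent along the path $t$. This is the same issue addressed in the remark following Corollary \ref{cor:dDdt}, and is resolved identically: one may track the eigenvalues as continuous functions of $t$ by choosing a consistent ordering, so the intermediate value argument above applies without modification. No new estimates are needed beyond those already derived in Lemma \ref{th:LemmaLowerBound} and Theorem \ref{th:TheoremSingleCrossingEig}, so the corollary truly is a short corollary of the main theorem.
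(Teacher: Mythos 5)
Your argument is correct and matches the paper's approach: the paper gives no separate proof, treating the corollary as an immediate consequence of Theorem \ref{th:TheoremSingleCrossingEig}, and you correctly identify that the theorem's bare statement (at most one crossing) is not enough and that one must invoke the one-sided facts established inside its proof (non-positive eigenvalues are non-decreasing, hence the number of negative eigenvalues cannot increase and a zero eigenvalue cannot turn negative), which your continuity/intermediate-value argument uses properly. As a minor simplification, the non-increase of the number of negative eigenvalues alone already suffices: that count is zero at $t'$ and cannot increase, so $\boldsymbol{Q}(t)$ has no negative eigenvalues for any $t \geq t'$.
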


Note that, by restricting the input distributions to be \emph{i.i.d.},
the matrix $\boldsymbol{Q}(t)$ is a diagonal matrix for all $t$. Thus, the single crossing property of the eigenvalues simplifies to
a single crossing property of the diagonal values, as expected due to the scalar single crossing property \cite{PROP,PROP_full}. However, in the general case, where the input distribution is arbitrary, the multivariate unique crossing property does not follow directly from the scalar case.

\section{Connecting to the Mutual Information} \label{sec:connectionMutualInformation}
As in the scalar scenario, our goal is to use the ``single crossings'' in order to derive
results regarding the mutual information. According to the I-MMSE relationship (\ref{eq:lineIntegral}), we would like to examine the following function:
\begin{eqnarray} \label{eq:differenceF}
\textrm{tr} \{ \boldsymbol{B}(t)
\boldsymbol{E}_G(t) \} - \textrm{tr} \{\boldsymbol{B}(t) \boldsymbol{E}(t) \} = \textrm{tr} \{ \boldsymbol{B}(t)
\boldsymbol{Q}(t) \} \textrm{.}
\end{eqnarray}
Since the trace is the sum of the eigenvalues, we need the following lemma, that extends our results regarding the eigenvalues of
$\boldsymbol{Q}(t)$ to the eigenvalues of $\boldsymbol{B}(t) \boldsymbol{Q}(t)$ for a positive semi-definite diagonal matrix $\boldsymbol{B}(t)$.

\newtheorem{BQ}[LemmaPath]{Lemma}
\begin{BQ} \label{th:BQ}
Each eigenvalue of $\boldsymbol{B}(t)\boldsymbol{Q}(t)$ crosses the
horizontal axis at most once.
\end{BQ}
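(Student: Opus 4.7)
The plan is to reduce the spectrum of the (non-symmetric) product $\boldsymbol{B}(t)\boldsymbol{Q}(t)$ to that of a symmetric matrix whose inertia is already controlled by $\boldsymbol{Q}(t)$, so that Theorem~\ref{th:TheoremSingleCrossingEig} can be re-used almost verbatim. First, because $\boldsymbol{B}(t)$ is diagonal and positive semi-definite, I would introduce its diagonal square root $\boldsymbol{B}(t)^{1/2}$ and form the symmetric matrix $\boldsymbol{N}(t) := \boldsymbol{B}(t)^{1/2}\boldsymbol{Q}(t)\boldsymbol{B}(t)^{1/2}$. Writing $\boldsymbol{B}(t)\boldsymbol{Q}(t) = \boldsymbol{B}(t)^{1/2}\bigl(\boldsymbol{B}(t)^{1/2}\boldsymbol{Q}(t)\bigr)$ and $\boldsymbol{N}(t) = \bigl(\boldsymbol{B}(t)^{1/2}\boldsymbol{Q}(t)\bigr)\boldsymbol{B}(t)^{1/2}$, and using that $AB$ and $BA$ share the same characteristic polynomial for any square $A,B$, I conclude that $\boldsymbol{B}(t)\boldsymbol{Q}(t)$ and $\boldsymbol{N}(t)$ have identical spectra, so the eigenvalues of $\boldsymbol{B}(t)\boldsymbol{Q}(t)$ are real and depend continuously on $t$.

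The second step is to control the inertia of $\boldsymbol{N}(t)$ by means of $\boldsymbol{Q}(t)$. Whenever $\boldsymbol{B}(t)\succ\boldsymbol{0}$, $\boldsymbol{N}(t)$ is congruent to $\boldsymbol{Q}(t)$ through the invertible $\boldsymbol{B}(t)^{1/2}$, and Sylvester's law of inertia yields that $\boldsymbol{N}(t)$ and $\boldsymbol{Q}(t)$ have identical numbers of strictly positive, strictly negative, and zero eigenvalues. When $\boldsymbol{B}(t)$ has some vanishing diagonal entries the congruence is rank-deficient and $\boldsymbol{N}(t)$ acquires additional zero eigenvalues; this case can be handled either by restricting to the range of $\boldsymbol{B}(t)^{1/2}$, where Sylvester still applies, or by the regularization $\boldsymbol{B}(t)\leftarrow\boldsymbol{B}(t)+\epsilon\boldsymbol{I}$ together with continuity of the spectrum as $\epsilon\to 0$. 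In every case, the number of strictly negative eigenvalues of $\boldsymbol{B}(t)\boldsymbol{Q}(t)$ is bounded above by the number of strictly negative eigenvalues of $\boldsymbol{Q}(t)$.

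Finally, I would transfer the single-crossing conclusion from $\boldsymbol{Q}(t)$ to $\boldsymbol{B}(t)\boldsymbol{Q}(t)$. Theorem~\ref{th:TheoremSingleCrossingEig} and its proof establish that the number of strictly negative eigenvalues of $\boldsymbol{Q}(t)$ is non-increasing in $t$ and, more importantly, that no eigenvalue of $\boldsymbol{Q}(t)$ can ever transition from non-negative to strictly negative. The first fact, combined with the inertia comparison of Step~2, propagates to $\boldsymbol{B}(t)\boldsymbol{Q}(t)$. The contradiction argument that closes Theorem~\ref{th:TheoremSingleCrossingEig}---a zero eigenvalue cannot become negative without forcing the total number of negative eigenvalues to rise---then carries over verbatim to the eigenvalues of $\boldsymbol{B}(t)\boldsymbol{Q}(t)$, yielding single crossing for each of them.

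I expect the most delicate step to be the singular-$\boldsymbol{B}(t)$ portion of Step~2: when some diagonal entries of $\boldsymbol{B}(t)$ vanish, Sylvester's law does not apply directly to the full $n\times n$ matrices and one must restrict to the range of $\boldsymbol{B}(t)^{1/2}$ or invoke a perturbation argument. A secondary technical point is that monotonicity of the negative-eigenvalue count is formally weaker than single crossing of each individual eigenvalue, so the final contradiction argument from Theorem~\ref{th:TheoremSingleCrossingEig} must be imported in full rather than replaced by pure counting.
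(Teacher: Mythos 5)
Your proposal follows essentially the same route as the paper: both pass from $\boldsymbol{B}(t)\boldsymbol{Q}(t)$ to the symmetric matrix $\boldsymbol{B}^{1/2}(t)\boldsymbol{Q}(t)\boldsymbol{B}^{1/2}(t)$, which shares its spectrum with the product, and then view it as a congruence of $\boldsymbol{Q}(t)$ so that Sylvester's law of inertia lets the single-crossing argument of Theorem~\ref{th:TheoremSingleCrossingEig} carry over. The only difference is that you explicitly treat the case of singular $\boldsymbol{B}(t)$ (by restricting to the range of $\boldsymbol{B}^{1/2}(t)$ or by an $\epsilon$-regularization), a point the paper's one-line proof passes over silently, so your version is, if anything, slightly more careful.
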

\begin{proof}
Using,
\begin{eqnarray} \label{eq:BQ}
\lambda_i \{ \boldsymbol{B}(t) \boldsymbol{Q}(t) \} = \lambda_i \{ \boldsymbol{B}^{\frac{1}{2}}(t) \boldsymbol{Q}(t) \boldsymbol{B}^{\frac{1}{2}}(t) \}
\end{eqnarray}
with the fact that $\boldsymbol{B}(t)$ is diagonal and positive semi-definite, we again have the eigenvalues of a congruent transformation.
Thus, the extension of the previous claim follows directly.
\end{proof}

In order to use our results regarding the function
$\boldsymbol{Q}(t)$ we need the following lemma:

\newtheorem{GaussianExistence}[LemmaPath]{Lemma}
\begin{GaussianExistence} \label{th:GaussianExistence}
For any $t'$, there exists a Gaussian input covariance matrix $\boldsymbol{C}_G$
such that the following holds:
\begin{enumerate}
\item $\boldsymbol{C}_G \preceq \boldsymbol{C}_X$
\item $I( \boldsymbol{X}; \boldsymbol{Y}(t')) = I( \boldsymbol{X}_G;
\boldsymbol{Y}_G(t'))$
\item $\boldsymbol{Q}(t') \succeq \boldsymbol{0}$
\end{enumerate}
where $\boldsymbol{Y}(t') = \boldsymbol{R}(t') \boldsymbol{X} +
\boldsymbol{N}$ and $\boldsymbol{Y}_G(t') = \boldsymbol{R}(t') \boldsymbol{X}_G +
\boldsymbol{N}$.
\end{GaussianExistence}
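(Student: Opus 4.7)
The strategy is to build $\boldsymbol{C}_G$ along a one-parameter family that automatically enforces conditions (1) and (3), and to use an intermediate-value argument (supported by Theorem \ref{th:TheoremSingleCrossingEig}) to secure (2). First, introduce the ``critical'' Gaussian covariance $\boldsymbol{M}_0 = \bigl( \boldsymbol{E}(t')^{-1} - \boldsymbol{R}(t')^T \boldsymbol{R}(t') \bigr)^{-1}$ (assuming invertibility; otherwise argue by a limiting procedure). Since any input with covariance $\boldsymbol{C}_X$ has MMSE dominated by that of the Gaussian with the same covariance, $\boldsymbol{E}(t') \preceq (\boldsymbol{C}_X^{-1} + \boldsymbol{R}(t')^T \boldsymbol{R}(t'))^{-1}$, which rearranges to $\boldsymbol{M}_0 \preceq \boldsymbol{C}_X$; and by construction $\boldsymbol{E}_G(t'; \boldsymbol{M}_0) = \boldsymbol{E}(t')$, so $\boldsymbol{Q}(t'; \boldsymbol{M}_0) = \boldsymbol{0}$. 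Parameterize by $\boldsymbol{C}_G(\alpha) = (1-\alpha)\boldsymbol{M}_0 + \alpha \boldsymbol{C}_X$ for $\alpha \in [0,1]$, so that $\boldsymbol{M}_0 \preceq \boldsymbol{C}_G(\alpha) \preceq \boldsymbol{C}_X$ along the whole family; this immediately secures (1). Condition (3) follows from the Loewner-monotonicity of $\boldsymbol{C} \mapsto (\boldsymbol{C}^{-1} + \boldsymbol{R}(t')^T \boldsymbol{R}(t'))^{-1}$: $\boldsymbol{C}_G(\alpha) \succeq \boldsymbol{M}_0$ gives $\boldsymbol{E}_G(t'; \boldsymbol{C}_G(\alpha)) \succeq \boldsymbol{E}(t')$, hence $\boldsymbol{Q}(t'; \boldsymbol{C}_G(\alpha)) \succeq \boldsymbol{0}$, for every $\alpha \in [0,1]$.

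For (2), I would apply the intermediate-value theorem to the continuous function $\phi(\alpha) = I(\boldsymbol{X}_G^{(\boldsymbol{C}_G(\alpha))}; \boldsymbol{Y}_G(t'))$. At $\alpha = 1$, $\phi(1) \geq I(\boldsymbol{X}; \boldsymbol{Y}(t'))$ because the Gaussian maximizes mutual information under a covariance constraint. It remains to verify that $\phi(0) = \textrm{MI}_G(\boldsymbol{M}_0) \leq I(\boldsymbol{X}; \boldsymbol{Y}(t'))$, after which IVT yields the desired $\alpha^*$. Rewriting both mutual informations as I-MMSE line integrals via (\ref{eq:lineIntegral}) along the path $\boldsymbol{R}(t)$, $t \in [0, t']$, this inequality reduces to
\[ \textrm{MI}_G(\boldsymbol{M}_0) - I(\boldsymbol{X}; \boldsymbol{Y}(t')) = \int_0^{t'} \textrm{tr}\bigl( \boldsymbol{B}(t)\, \boldsymbol{Q}(t; \boldsymbol{M}_0) \bigr)\, dt \leq 0. \]

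The main obstacle is establishing this last inequality. The boundary data are $\boldsymbol{Q}(0; \boldsymbol{M}_0) = \boldsymbol{M}_0 - \boldsymbol{C}_X \preceq \boldsymbol{0}$ and $\boldsymbol{Q}(t'; \boldsymbol{M}_0) = \boldsymbol{0}$. By Theorem \ref{th:TheoremSingleCrossingEig}, each eigenvalue of $\boldsymbol{Q}(t; \boldsymbol{M}_0)$ crosses the horizontal axis at most once; and by the derivative bound (\ref{eq:bounddlambda}), non-positive eigenvalues are non-decreasing in $t$. Combined with the terminal condition $\boldsymbol{Q}(t') = \boldsymbol{0}$ and the sign-preservation for $t > t'$ supplied by Corollary \ref{th:CorollarySingleCrossingEig} (which precludes an eigenvalue from re-entering the negative half-axis after touching zero), these properties constrain each $\lambda_i(\boldsymbol{Q}(t; \boldsymbol{M}_0))$ to remain non-positive on $[0, t']$; the derivative at $t'$, computed via Corollary \ref{cor:dDdt} applied to the diagonal $\boldsymbol{Q}(t') = \boldsymbol{0}$ and the PSD lower bound of Lemma \ref{th:LemmaLowerBound} on $\boldsymbol{Q}'(t')$, equals the corresponding diagonal entry of a PSD matrix, so each eigenvalue approaches $0$ from below as $t \uparrow t'$, ruling out the degenerate scenario of a positive bump returning to zero at $t'$. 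Since $\boldsymbol{B}(t)$ is diagonal and positive semidefinite, $\textrm{tr}(\boldsymbol{B}(t)\, \boldsymbol{Q}(t; \boldsymbol{M}_0)) \leq 0$ pointwise, and the integral is non-positive, completing the proof.
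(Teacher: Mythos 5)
Your construction is, at its core, the same as the paper's: both arguments sandwich $I(\boldsymbol{X};\boldsymbol{Y}(t'))$ between the mutual information of the ``MMSE-matched'' Gaussian (your $\boldsymbol{M}_0$, which is precisely the paper's endpoint $\boldsymbol{J}=\boldsymbol{0}$, i.e.\ $\boldsymbol{E}_G(t')=\boldsymbol{E}(t')$) and that of the covariance-matched Gaussian $\boldsymbol{C}_X$ (the paper's endpoint $\boldsymbol{J}=\boldsymbol{C}=\boldsymbol{E}_L(t')-\boldsymbol{E}(t')$), verify that conditions (1) and (3) hold along the whole connecting family, and then locate an intermediate Gaussian meeting condition (2) by a continuity/monotonicity argument. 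The only genuine difference is the parameterization: you interpolate covariances linearly, $\boldsymbol{C}_G(\alpha)=(1-\alpha)\boldsymbol{M}_0+\alpha\boldsymbol{C}_X$, using operator monotonicity of $\boldsymbol{C}\mapsto(\boldsymbol{C}^{-1}+\boldsymbol{R}^T\boldsymbol{R})^{-1}$ for (1) and (3) and the intermediate value theorem on the Gaussian log-det for (2), whereas the paper interpolates the MMSE gap $\boldsymbol{J}=(1-\nu)\boldsymbol{C}$ and invokes the monotone function $r(\nu)$ of \cite{EkremUlukusMIMO}; these are interchangeable, and your explicit closed form for $\boldsymbol{M}_0$ together with $\boldsymbol{E}(t')\preceq\boldsymbol{E}_L(t')$ is a clean way to see $\boldsymbol{M}_0\preceq\boldsymbol{C}_X$ (the paper's condition $\boldsymbol{J}\preceq\boldsymbol{C}$).

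There is, however, one step that does not hold up as written: the exclusion of the scenario in which an eigenvalue of $\boldsymbol{Q}(t;\boldsymbol{M}_0)$ becomes positive somewhere in $(0,t')$ and returns to zero exactly at $t'$. You argue via Corollary \ref{cor:dDdt} and Lemma \ref{th:LemmaLowerBound} that each eigenvalue ``approaches $0$ from below,'' but at $t=t'$ one has $\boldsymbol{E}_G(t')=\boldsymbol{E}(t')$ by the very definition of $\boldsymbol{M}_0$, so the lower bound of Lemma \ref{th:LemmaLowerBound} degenerates to $\boldsymbol{0}$ and yields only $\lambda_i'(t')\ge 0$; this is perfectly compatible with an eigenvalue approaching $0$ from above (for instance behaving like $(t'-t)^2$ near $t'$), so the local derivative bound at $t'$ rules nothing out. (Corollary \ref{th:CorollarySingleCrossingEig} does not help either, since it concerns $t\ge t'$.) The statement you actually need --- that $\boldsymbol{Q}(t';\boldsymbol{M}_0)=\boldsymbol{0}$ forces $\boldsymbol{Q}(t;\boldsymbol{M}_0)\preceq\boldsymbol{0}$ for all $t\le t'$ --- is exactly what the paper draws from the single-crossing machinery of Theorem \ref{th:TheoremSingleCrossingEig} (with the complete formal treatment deferred to \cite{full}), before combining it with Lemma \ref{th:BQ} and the I-MMSE integral (\ref{eq:lineIntegral}). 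Replacing your local argument at $t'$ by that global inertia/monotonicity argument (or by an explicit appeal to \cite{full}) closes the hole and brings your proof fully in line with the paper's.
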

\begin{proof} Due to the space limitations we will give only a sketch of the proof. For full details see \cite{full}.
From the third requirement we have:
\begin{eqnarray} \label{eq:definingJ}
\boldsymbol{Q}(t') = \boldsymbol{E}_G(t') - \boldsymbol{E}(t')
\equiv \boldsymbol{J} \succeq \boldsymbol{0} \textrm{.}
\end{eqnarray}
Furthermore, we can define:
\begin{eqnarray} \label{eq:definingC}
\boldsymbol{C} \equiv \boldsymbol{E}_L(t')
- \boldsymbol{E}(t') \succeq \boldsymbol{0}
\end{eqnarray}
where $\boldsymbol{E}_L(t')$ is the error covariance matrix assuming an optimal \underline{linear} estimator.
When $\boldsymbol{J} = \boldsymbol{0}$ ($\boldsymbol{E}_G(t') =
\boldsymbol{E}(t')$) we have that $\boldsymbol{Q}(t') =
\boldsymbol{0}$. According to Theorem \ref{th:TheoremSingleCrossingEig}
we have that all eigenvalues are non-positive for
$t \leq t'$. Furthermore, due to Lemma \ref{th:BQ} we conclude that the eigenvalues of
$\boldsymbol{B}(t)\boldsymbol{Q}(t)$ are also non-positive for all $t \leq t'$ and we can conclude, using (\ref{eq:lineIntegral}), that $I(
\boldsymbol{X}_G; \boldsymbol{Y}_G(t')) \leq I(\boldsymbol{X};
\boldsymbol{Y}(t'))$. If $\boldsymbol{J} = \boldsymbol{C}$ we have
that $\boldsymbol{C}_G = \boldsymbol{C}_X$ in which case we have $I(
\boldsymbol{X}_G; \boldsymbol{Y}_G(t')) \geq I(\boldsymbol{X};
\boldsymbol{Y}(t'))$. In order to comply with the first requirement 
we need to require that
$\boldsymbol{J} \preceq \boldsymbol{C}$. Thus, requirements 1 and 3 can be written using $\boldsymbol{J}$ and
$\boldsymbol{C}$, defined in equations (\ref{eq:definingJ}) and (\ref{eq:definingC}) respectively, and we have the following:
\begin{eqnarray}
I(\boldsymbol{X}_G; \boldsymbol{Y}_G(t')) \Bigg|_{\boldsymbol{J} = \boldsymbol{0} } \leq I(\boldsymbol{X};
\boldsymbol{Y}(t')) \leq I(\boldsymbol{X}_G; \boldsymbol{Y}_G(t')) \Bigg|_{\boldsymbol{J} = \boldsymbol{C} } \textrm{.} \nonumber 
\end{eqnarray}
The question is whether there exists such a
$\boldsymbol{J}$ that will also attain $I( \boldsymbol{X}_G;
\boldsymbol{Y}_G(t')) = I(\boldsymbol{X}; \boldsymbol{Y}(t')) \equiv
\alpha$. Both upper and lower bound can be expressed using the function $r(t) = \frac{1}{2} \log
\frac{|\boldsymbol{A}|}{|  \boldsymbol{B} + \boldsymbol{\Delta} \nu
|}$ which is continuous and monotonically decreasing
in $\nu$ for $0 \leq \nu \leq 1$ for $\boldsymbol{A} \succ \boldsymbol{0}$,
$\boldsymbol{B} \succ \boldsymbol{0}$ and $\boldsymbol{\Delta}
\succeq \boldsymbol{0}$ \cite{EkremUlukusMIMO}. Thus, there exists a $0 \leq \nu^* \leq 1$ that attains equality. That is, $\boldsymbol{J}^* = (1 - \nu^*) \boldsymbol{C}$
attains all three requirement.
\end{proof}



\section{Application: parallel degraded MIMO Gaussian BC} \label{sec:application}
As an example for the usage of these results we examine the parallel \emph{degraded} Gaussian BC.
We first note that the results attained so far have also been extended to the conditioned case where $(\boldsymbol{X}, U)$ are jointly distributed and $U - \boldsymbol{X} - \boldsymbol{Y}$ forms a Markov chain, but are omitted here due to space limitations. The conditioned MMSE is defined as:
\begin{eqnarray}
\label{eq:defineEu} \boldsymbol{E}^u(t) & = & \mathbb{E}\{
(\boldsymbol{X} - \mathbb{E}\{ \boldsymbol{X} | \boldsymbol{R}(t)
\boldsymbol{X} + \boldsymbol{N}, U \})  \nonumber \\ 
& &  (\boldsymbol{X} -
\mathbb{E}\{ \boldsymbol{X} | \boldsymbol{R}(t) \boldsymbol{X} +
\boldsymbol{N}, U \})^T \}
\end{eqnarray}
and the conditioned matrix $\boldsymbol{Q}^u(t) = \boldsymbol{E}^G(t) - \boldsymbol{E}^u(t)$,
which is the difference between the MMSE matrix assuming a general Gaussian input distribution independent of $U$, and the conditioned MMSE matrix.

We consider the \emph{degraded} parallel Gaussian BC channel:
\begin{eqnarray} \label{eq:model3}
\boldsymbol{Y}_1[m] & = & \boldsymbol{H}_1 \boldsymbol{X}[m] + \boldsymbol{N}_1[m] \nonumber \\
\boldsymbol{Y}_2[m] & = & \boldsymbol{H}_2 \boldsymbol{X}[m] +
\boldsymbol{N}_2[m]
\end{eqnarray}
where $\boldsymbol{N}_1[m]$ and $\boldsymbol{N}_2[m]$ are standard
additive Gaussian noise vectors, $\boldsymbol{H}_1$ and
$\boldsymbol{H}_2$ are diagonal positive definite matrices such that
$\boldsymbol{H}_1 \preceq \boldsymbol{H}_2$.
The channel input satisfies the covariance constraint: $\mathbb{E} \{ \boldsymbol{X} \boldsymbol{X}^T \} \preceq \boldsymbol{S}$,
where $\boldsymbol{S}$ is some positive definite matrix.

One way of proving that the Gaussian input achieves the
capacity region is by using the single-letter expression 
\cite{Comments}:
\begin{eqnarray} \label{eq:degradedCapacityRegion}
R_1 & \leq & I(U;\boldsymbol{Y}_1) \nonumber \\
R_2 & \leq & I(\boldsymbol{X};\boldsymbol{Y}_2|U)
\end{eqnarray}
where $U$ is an auxiliary random variable over a certain alphabet
that satisfies the Markov relation $U - \boldsymbol{X} -
(\boldsymbol{Y}_1,\boldsymbol{Y}_2)$. This was done for the scalar
Gaussian BC in \cite{PROP,PROP_full}. We will try to follow similar steps for
the \emph{degraded} parallel Gaussian channel.

Assume a pair
$(\boldsymbol{X}, U)$ with covariance matrix $\boldsymbol{C}_X$.
Using the conditioned version of Lemma \ref{th:GaussianExistence} we know that there exists a Gaussian distribution, with covariance $\boldsymbol{B} \preceq \boldsymbol{S}$, with the following properties:
\begin{equation} \label{eq:MIMOBC}
I(\boldsymbol{X} ; \boldsymbol{H}_1 \boldsymbol{X} + \boldsymbol{N}
| U)  = I(\boldsymbol{X}_G ; \boldsymbol{H}_1 \boldsymbol{X}_G +
\boldsymbol{N})
\end{equation}
\begin{align}
& I(\boldsymbol{X}_G ; \boldsymbol{H}_2 \boldsymbol{X}_G +
\boldsymbol{N} ) - I(\boldsymbol{X} ; \boldsymbol{H}_2
\boldsymbol{X} + \boldsymbol{N} |U )  \nonumber
\\ & = \int_{t=0}^{t_2}
\textrm{tr}\{ \boldsymbol{B}(t) \boldsymbol{Q}^u(t) \} dt \nonumber
\\ \label{eq:MIMOBC_inside}
& = \int_{t=0}^{t_1} \textrm{tr}\{ \boldsymbol{B}(t)
\boldsymbol{Q}^u(t) \} dt + \int_{t_1}^{t_2} \textrm{tr}\{
\boldsymbol{B}(t)
\boldsymbol{Q}^u(t) \} dt  \\
& = 0 + \int_{t_1}^{t_2} \textrm{tr}\{ \boldsymbol{B}(t)
\boldsymbol{Q}^u(t) \} dt \geq 0
\end{align}
where (\ref{eq:MIMOBC_inside}) is due to (\ref{eq:MIMOBC}),
and the inequality is due to the fact that $\boldsymbol{Q}^u(t)
\succeq \boldsymbol{0}$ for all $t \geq t_1$ and Lemma \ref{th:BQ}. Thus, we have a Gaussian
distribution that complies with a covariance constraint and also,
\begin{eqnarray} \label{eq:conclusion}
I(\boldsymbol{X} ; \boldsymbol{H}_1 \boldsymbol{X} + \boldsymbol{N}
| U) = I(\boldsymbol{X}_G ; \boldsymbol{H}_1 \boldsymbol{X}_G +
\boldsymbol{N} ) \nonumber \\
I(\boldsymbol{X} ; \boldsymbol{H}_2 \boldsymbol{X} + \boldsymbol{N}
|U) \leq I(\boldsymbol{X}_G ; \boldsymbol{H}_2 \boldsymbol{X}_G +
\boldsymbol{N} )
\end{eqnarray}
assuming a parallel \emph{degraded} model, that is, $\boldsymbol{0}
\prec \boldsymbol{H}_1 \preceq \boldsymbol{H}_2$. Now, substituting the above into the region given in equation (\ref{eq:degradedCapacityRegion}) we obtain the following region:
\begin{eqnarray} \label{eq:capacityRegionBCCovarianceConstraint}
R_1 & \leq & I( U ; \boldsymbol{Y}_1) = I( \boldsymbol{X};
\boldsymbol{Y}_1 ) - I( \boldsymbol{X}; \boldsymbol{Y}_1 | U)
\nonumber \\
& \leq & \frac{1}{2} \log | \boldsymbol{I} + \boldsymbol{H}_1
\boldsymbol{S} \boldsymbol{H}_1^T | - \frac{1}{2} \log |
\boldsymbol{I} +
\boldsymbol{H}_1 \boldsymbol{B} \boldsymbol{H}_1^T |  \nonumber \\
& = &  \frac{1}{2} \log  \frac{| \boldsymbol{I} + \boldsymbol{H}_1 \boldsymbol{S} \boldsymbol{H}_1^T  |}{| \boldsymbol{I} + \boldsymbol{H}_1 \boldsymbol{B} \boldsymbol{H}_1^T |} \\
R_2 & \leq & I( \boldsymbol{X}; \boldsymbol{Y}_2 | U) \leq
\frac{1}{2} \log | \boldsymbol{I} + \boldsymbol{H}_2 \boldsymbol{B}
\boldsymbol{H}_2^T | \textrm{.}
\end{eqnarray}
This concludes the converse part of the proof. The achievability is well-known using Gaussian superposition coding.


\bibliographystyle{IEEEtran}   

\bibliography{IEEEabrv,bib}

\end{document}